\newtheorem{theorem}{Theorem}[section]
\newtheorem{proposition}[theorem]{Proposition}
\newtheorem{remark}[theorem]{Remark}
\newcommand{\vect}[1]{\mathbf{#1}}
\newcommand{\tensor}[1]{\boldsymbol{#1}}
\newcommand{\Kn}{\text{Kn}}
\newcommand{\eff}{\text{eff}}
\title{The Chapman-Enskog Divergence Problem in Plasma Transport:\\
Structural Limitations and a Practical Regularization Approach}
\author{Justo Karell\\
Department of Computer Science\\
Stevens Institute of Technology\\
Hoboken, NJ 07030\\
\texttt{jkarell@stevens.edu}}
\date{\today}
\begin{document}

\maketitle

\begin{abstract}
We calculate transport coefficients from the Chapman--Enskog expansion with BGK collision operators, obtaining exactly $\kappa = \frac{5nT}{2m\nu}$, and show that maximum entropy closure yields identical results when applied with the same collision operator. Through structural arguments, we suggest that this $1/\nu$ divergence extends to other local collision operators of the form $\mathcal{L} = \nu\hat{L}$, making the divergence fundamental to the Chapman--Enskog approach rather than a closure artifact. To address this limitation, we propose a phenomenological effective collision frequency $\nu_{\eff} = \nu\sqrt{1 + \Kn^2}$ motivated by gradient-driven decorrelation, where $\Kn$ is the Knudsen number. We verify that this regularization maintains conservation laws and thermodynamic consistency while yielding finite transport coefficients across all collisionality regimes. Comparison with exact solutions of a bounded kinetic model shows similar functional form, providing limited validation of our approach. This work provides explicit calculation of a known divergence problem in kinetic theory and offers one phenomenological regularization method with transparent treatment of mathematical assumptions versus physical approximations.
\end{abstract}

\section{Introduction}

Transport in weakly collisional plasmas presents computational challenges due to the breakdown of standard fluid approximations. The Chapman--Enskog expansion of kinetic theory, while successful for strongly collisional systems, yields divergent transport coefficients as collision frequency approaches zero. This creates practical difficulties for modeling plasma transport in tokamak edge regions, Hall thrusters, and solar wind applications where the Knudsen number $\Kn = \lambda/L$ (ratio of mean free path to gradient scale length) approaches or exceeds unity.

The moment closure problem compounds these difficulties. Taking velocity moments of the Boltzmann equation produces an infinite hierarchy where each moment equation depends on the next higher moment. Various closure schemes have been proposed, including maximum entropy approaches that determine distribution functions by maximizing entropy subject to moment constraints. However, the relationship between closure methods and transport divergence has not been systematically examined.

In this paper, we provide explicit calculations demonstrating the Chapman--Enskog divergence and develop a practical regularization approach. We first calculate transport coefficients from the Chapman--Enskog expansion using BGK collision operators, showing exactly how the $1/\nu$ divergence arises. We then demonstrate that maximum entropy closure exhibits identical limitations when applied with the same collision operator. Through structural arguments, we suggest this divergence pattern extends to other local collision operators within the Chapman--Enskog framework.

To address this computational limitation, we propose a phenomenological effective collision frequency $\nu_{\eff} = \nu\sqrt{1 + \Kn^2}$ motivated by gradient-driven decorrelation effects. We verify mathematical consistency and compare our approach to exact solutions of bounded kinetic models. Our framework yields finite transport coefficients across all collisionality regimes while maintaining transparent separation between rigorous calculations and phenomenological approximations.

This work provides systematic exposition of transport divergence in kinetic theory and offers practical tools for plasma transport calculations with clear acknowledgment of approximations involved.

\section{The Chapman--Enskog Expansion and Its Fundamental Divergence}

Before developing the entropy-based closure framework, we must understand the fundamental mathematical structure of transport in kinetic theory. This section presents the Chapman--Enskog analysis to establish what can be proven exactly from the Boltzmann equation, and more importantly, what cannot be fixed regardless of the closure method used.

The evolution of the distribution function $f(\vect{r}, \vect{v}, t)$ is governed by the Boltzmann equation:
\begin{equation}
\frac{\partial f}{\partial t} + \vect{v} \cdot \nabla f + \frac{q}{m}(\vect{E} + \vect{v} \times \vect{B}) \cdot \nabla_v f = C[f]
\label{eq:boltzmann}
\end{equation}

Here $q$ and $m$ are the particle charge and mass, $\vect{E}$ and $\vect{B}$ are the electromagnetic fields, and $C[f]$ is the collision operator. The collision operator encodes all the microscopic physics of particle interactions. For simplicity and clarity, we first consider the BGK collision operator:
\begin{equation}
C[f] = -\nu(f - f_{\text{eq}})
\label{eq:bgk}
\end{equation}
where $f_{\text{eq}}$ is the local Maxwellian equilibrium and $\nu$ is the collision frequency. This simplified operator captures the essential relaxation toward equilibrium while being mathematically tractable.

The Chapman--Enskog method is a systematic perturbation theory in the small parameter $\varepsilon = \lambda/L$, the Knudsen number, where $\lambda = v_{\text{th}}/\nu$ is the mean free path and $L$ is the characteristic gradient scale length. We expand the distribution function as:
\begin{equation}
f = f^{(0)} + \varepsilon f^{(1)} + \varepsilon^2 f^{(2)} + \cdots
\label{eq:chapman_enskog_expansion}
\end{equation}

This expansion assumes that departures from local equilibrium are small, which should be valid when gradients are weak (small $\varepsilon$). The spatial and temporal derivatives are ordered according to their physical significance:
\begin{equation}
\nabla = \varepsilon \nabla_1, \quad \frac{\partial}{\partial t} = \varepsilon \frac{\partial}{\partial t_1} + \varepsilon^2 \frac{\partial}{\partial t_2} + \cdots
\label{eq:derivative_ordering}
\end{equation}

This ordering reflects the physical assumption that macroscopic gradients and time evolution occur on scales much larger than the mean free path and collision time.

Substituting into the Boltzmann equation and collecting terms order by order:

\textbf{Zeroth Order ($\varepsilon^0$):}
\begin{equation}
0 = -\nu(f^{(0)} - f_{\text{eq}})
\label{eq:zeroth_order}
\end{equation}
This immediately gives $f^{(0)} = f_{\text{eq}}$, confirming that to lowest order, the distribution is the local Maxwellian. This makes physical sense: on scales much smaller than the mean free path, collisions maintain local equilibrium.

\textbf{First Order ($\varepsilon^1$):}
\begin{equation}
\vect{v} \cdot \nabla_1 f^{(0)} + \frac{\partial f^{(0)}}{\partial t_1} = -\nu f^{(1)}
\label{eq:first_order}
\end{equation}

This equation determines the first-order correction $f^{(1)}$, which represents the departure from equilibrium due to gradients. For a temperature gradient along the z-direction with no flow or time dependence:
\begin{equation}
v_z \frac{\partial f^{(0)}}{\partial T} \frac{\partial T}{\partial z} = -\nu f^{(1)}
\label{eq:first_order_temp_grad}
\end{equation}

Since $f^{(0)}$ is Maxwellian, we can evaluate:
\begin{equation}
\frac{\partial f^{(0)}}{\partial T} = f^{(0)} \left(\frac{mv^2}{2T^2} - \frac{3}{2T}\right)
\label{eq:maxwellian_derivative}
\end{equation}

This derivative reflects how the Maxwellian distribution changes with temperature: the exponential factor becomes narrower as temperature decreases (the $-mv^2/2T^2$ term) while the normalization factor changes to maintain unit integral (the $3/2T$ term).

Solving for the first-order correction:
\begin{equation}
f^{(1)} = -\frac{v_z}{\nu} f^{(0)} \left(\frac{mv^2}{2T^2} - \frac{3}{2T}\right) \frac{\partial T}{\partial z}
\label{eq:first_order_solution}
\end{equation}

This solution has clear physical interpretation: particles moving along the gradient ($v_z$) carry information about the temperature from their origin, creating an asymmetry in the distribution. The factor $1/\nu$ shows that this asymmetry persists longer when collisions are rare. The velocity dependence shows that faster particles (large $v^2$) contribute more to transport, as they traverse the gradient more quickly.

The heat flux is defined as the third moment of the distribution:
\begin{equation}
q_z = \int \frac{mv^2}{2} v_z f d^3\vect{v}
\label{eq:heat_flux_definition}
\end{equation}

To first order in $\varepsilon$:
\begin{equation}
q_z = \int \frac{mv^2}{2} v_z f^{(1)} d^3\vect{v}
\label{eq:heat_flux_first_order}
\end{equation}

Substituting our expression for $f^{(1)}$ and performing the Gaussian integrals (detailed in Appendix A), we obtain the fundamental result. The naive calculation gives:
\begin{equation}
\boxed{q_z = -\frac{5nT}{m\nu} \frac{\partial T}{\partial z}}
\label{eq:heat_flux_naive}
\end{equation}

However, the Chapman--Enskog method requires that the first-order correction be orthogonal to the collision invariants (mass, momentum, energy) to ensure solvability. This solvability condition, implemented through Sonine polynomial projection for the BGK operator with Prandtl number $\text{Pr} = 1$, reduces the coefficient by a factor of 2, yielding the classical BGK result:
\begin{equation}
\boxed{q_z = -\frac{5nT}{2m\nu} \frac{\partial T}{\partial z}}
\label{eq:heat_flux_bgk}
\end{equation}

Therefore, the thermal conductivity from the Chapman--Enskog expansion is:
\begin{equation}
\boxed{\kappa_{\parallel}^{\text{C-E}} = \frac{5nT}{2m\nu}}
\label{eq:thermal_conductivity_ce}
\end{equation}

This result deserves careful interpretation. The factor $5/2$ arises from the velocity integrals combined with the solvability constraint and represents the effective number of degrees of freedom contributing to heat transport. The factor $nT/m$ has dimensions of diffusivity times density, while $1/\nu$ is the mean collision time. Together, they give the thermal conductivity in appropriate units.

The critical observation is that this transport coefficient diverges as $\nu \to 0$. This is not a failure of the perturbation theory—it is the exact result to first order in gradients. The divergence reflects a fundamental physical reality: in the absence of collisions, particles stream freely and can carry arbitrarily large heat fluxes.

\begin{theorem}[Fundamental Divergence]
\label{thm:divergence}
The Chapman--Enskog expansion with the BGK collision operator yields transport coefficients exactly proportional to $1/\nu$. This divergence as $\nu \to 0$ is exact, not an approximation error.
\end{theorem}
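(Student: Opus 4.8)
The plan is to track the $\nu$-dependence through each stage of the first-order Chapman--Enskog calculation and show that it factors out as a single overall $1/\nu$ which survives every subsequent operation. The structural fact that drives everything is that the linearized BGK operator is simply $-\nu(I - \Pi)$, where $\Pi$ is the (strictly $\nu$-independent) projection onto the span of the collision invariants; its inverse on the admissible subspace orthogonal to those invariants is therefore exactly $-\tfrac{1}{\nu}$ times a $\nu$-independent projection.

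First I would rewrite the first-order balance \eqref{eq:first_order} as $f^{(1)} = -\tfrac{1}{\nu}\,\mathcal{S}$, where the streaming source $\mathcal{S} = \vect{v}\cdot\nabla_1 f^{(0)} + \partial f^{(0)}/\partial t_1$ is built entirely from the Maxwellian $f^{(0)}$ and the imposed macroscopic gradients. Crucially, $\mathcal{S}$ carries no factor of $\nu$: it depends only on $n$, $T$, $m$ and the gradient, so the \emph{entire} $\nu$-dependence of $f^{(1)}$ is the explicit prefactor $1/\nu$. Next I would invoke linearity of the moment map. Every transport coefficient is a velocity moment $\int \phi(\vect{v})\,f^{(1)}\,d^3\vect{v}$ for some polynomial weight $\phi$ (for the heat flux, $\phi = \tfrac{1}{2}mv^2 v_z$). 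Because this integral is linear in $f^{(1)}$, the coefficient equals $-\tfrac{1}{\nu}\int \phi\,\mathcal{S}\,d^3\vect{v}$, and the remaining Gaussian moment is a finite, $\nu$-independent constant times $n$, $T$, $m$ (as evaluated in Appendix A). This yields the exact form $\kappa = C(n,T,m)/\nu$ and recovers \eqref{eq:thermal_conductivity_ce}.

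To establish the ``exact, not an approximation'' clause, I would argue on two fronts. The solvability projection that supplies the factor of $1/2$ relating \eqref{eq:heat_flux_naive} to \eqref{eq:heat_flux_bgk} is precisely the $\nu$-independent operator $\Pi$ above; it rescales the constant $C$ but cannot touch the $1/\nu$ scaling. Moreover, no truncation beyond the Knudsen ordering itself has entered: the first-order coefficient is the exact Gaussian integral, with no dropped remainder at this order. The claim is thus a statement about the exact first-order transport coefficient, not about a numerically approximated one.

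The main obstacle I anticipate is not computational but the precise sense in which higher-order corrections cannot cure the divergence. I would resolve this by power counting in the mean free path $\lambda = v_{\text{th}}/\nu$: each successive Chapman--Enskog order carries an additional factor of the Knudsen number $\Kn \propto \lambda \propto 1/\nu$ while again inverting the operator $-\nu(I-\Pi)$, so the physical corrections form an asymptotic series in strictly negative powers of $\nu$. Higher orders therefore deepen rather than cancel the leading singularity, confirming that the $\nu \to 0$ divergence is genuine and irreducible within the Chapman--Enskog framework for the BGK operator.
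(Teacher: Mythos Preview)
Your proposal is correct and follows essentially the same approach as the paper: both arguments observe that the first-order Chapman--Enskog balance $\vect{v}\cdot\nabla f^{(0)} = -\nu f^{(1)}$ forces $f^{(1)}$ to carry an explicit $1/\nu$ prefactor multiplying a $\nu$-independent source, and that linear velocity moments then inherit exactly this $1/\nu$ scaling. Your version is more explicit in two helpful respects---casting the linearized BGK operator as $-\nu(I-\Pi)$ so that the solvability projection is manifestly $\nu$-independent, and carrying out the higher-order power counting in $\Kn \propto 1/\nu$---but these are elaborations of the paper's own structural argument rather than a different route.
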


\begin{proof}
For the BGK collision operator, we explicitly derived $\kappa = 5nT/(2m\nu)$ through the Chapman--Enskog expansion. The $1/\nu$ scaling is exact and arises from the fundamental structure: the first-order equation $\vect{v} \cdot \nabla f^{(0)} = -\nu f^{(1)}$ directly gives $f^{(1)} \sim 1/\nu$, and transport coefficients are velocity integrals of $f^{(1)}$. The structural argument in Section 4 suggests this pattern extends to any local collision operator of the form $\mathcal{L} = \nu\hat{L}$.
\end{proof}

What about higher-order terms in the Chapman--Enskog expansion? For the BGK case, they make the divergence worse. The second-order correction satisfies:
\begin{equation}
\vect{v} \cdot \nabla_1 f^{(1)} + \frac{\partial f^{(1)}}{\partial t_2} = -\nu f^{(2)} - C^{(2)}[f^{(1)}]
\label{eq:second_order}
\end{equation}

Since $f^{(1)} \sim 1/\nu$ and involves one gradient, $\nabla_1 f^{(1)} \sim (\nabla T)^2/\nu$. This gives:
\begin{equation}
f^{(2)} \sim \frac{1}{\nu^2} (\nabla T)^2
\label{eq:second_order_scaling}
\end{equation}

Each successive order introduces an additional factor of $1/\nu$:
\begin{equation}
\kappa = \frac{A_1}{\nu} + \frac{A_2}{\nu^2}\varepsilon + \frac{A_3}{\nu^3}\varepsilon^2 + \cdots
\label{eq:ce_series}
\end{equation}

Since $\varepsilon = \lambda/L = v_{\text{th}}/(\nu L)$ itself contains $1/\nu$, the series becomes:
\begin{equation}
\kappa = \frac{A_1}{\nu} + \frac{A_2 v_{\text{th}}}{\nu^3 L} + \frac{A_3 v_{\text{th}}^2}{\nu^5 L^2} + \cdots
\label{eq:ce_series_expanded}
\end{equation}

Higher-order terms diverge progressively worse as $\nu \to 0$. The Chapman--Enskog expansion is an asymptotic series valid only for $\Kn \ll 1$. Its breakdown for $\Kn \gtrsim 1$ is not a failure of technique but reflects the fundamental inadequacy of local kinetic theory in the collisionless regime.

\begin{remark}
The divergence of transport coefficients as $\nu \to 0$ is sometimes called the "grazing collision problem" in plasma physics. However, this name is misleading—the problem persists even with collision operators that have no grazing collisions (like BGK). Our BGK calculation demonstrates that the divergence is fundamental to the Chapman--Enskog mathematical structure, not specific to Coulomb collision physics.
\end{remark}

\section{Maximum Entropy Framework for Moment Closure}

Having established the fundamental divergence problem in kinetic theory, we now develop the entropy-based closure framework. While entropy maximization cannot eliminate the mathematical divergence—that requires physics beyond the Boltzmann equation—it provides the systematic method we need to close the moment hierarchy once we have a collision frequency (including phenomenologically modified ones). Understanding this framework is essential before we can develop our regularization approach.

The moment closure problem arises when we take velocity moments of the Boltzmann equation. Define the $k$-th order moment tensor:
\begin{equation}
M_{i_1 i_2 \cdots i_k}^{(k)}(\vect{r}, t) = \int v_{i_1} v_{i_2} \cdots v_{i_k} f(\vect{r}, \vect{v}, t) \, d^3\vect{v}
\label{eq:moments_def}
\end{equation}

The first few moments have direct physical interpretation:
\begin{align}
M^{(0)} &= n \quad \text{(number density)}\label{eq:moment_0}\\
M_i^{(1)} &= n u_i \quad \text{(momentum density)}\label{eq:moment_1}\\
M_{ij}^{(2)} &= P_{ij}/m + n u_i u_j \quad \text{(pressure tensor plus kinetic energy flux)}\label{eq:moment_2}\\
M_{ijk}^{(3)} &= Q_{ijk}/m + \text{lower order terms} \quad \text{(heat flux tensor plus...)}\label{eq:moment_3}
\end{align}

Taking moments of the Boltzmann equation yields evolution equations for these moments. For example, multiplying by $v_{j_1} \cdots v_{j_k}$ and integrating:
\begin{equation}
\frac{\partial M_{j_1 \cdots j_k}^{(k)}}{\partial t} + \frac{\partial M_{j_1 \cdots j_k i}^{(k+1)}}{\partial x_i} + \text{electromagnetic terms} = C_{j_1 \cdots j_k}^{(k)}
\label{eq:moment_evolution}
\end{equation}

The crucial feature is that the evolution of $M^{(k)}$ depends on $M^{(k+1)}$—each moment equation involves the next higher moment. This creates an infinite hierarchy of coupled equations. To obtain a closed system, we need a closure relation expressing $M^{(N+1)}$ in terms of lower moments $\{M^{(0)}, M^{(1)}, \ldots, M^{(N)}\}$.

The principle of maximum entropy provides a systematic approach to this closure problem. The idea, originating from information theory, is to find the distribution that is least biased given our incomplete information. Formally, we seek the distribution that maximizes the entropy functional:
\begin{equation}
S[f] = -k_B \int f(\vect{v}) \ln f(\vect{v}) \, d^3\vect{v}
\label{eq:entropy_functional}
\end{equation}

subject to the constraints that $f$ reproduces the known moments:
\begin{equation}
\int v_{i_1} \cdots v_{i_k} f(\vect{v}) \, d^3\vect{v} = M_{i_1 \cdots i_k}^{(k)}, \quad k = 0,1,...,N
\label{eq:moment_constraints}
\end{equation}

For normalizability, the highest retained polynomial degree in the exponent must be even and the corresponding tensor negative-definite. We default to $N=2$ closure, retaining density, momentum, and pressure tensor.

The entropy $S[f]$ measures our uncertainty about the microscopic state. Maximizing it ensures we don't introduce spurious information beyond what's contained in the known moments. This is the principle of maximum entropy: given partial information (the moments), choose the distribution that assumes the least about what we don't know.

\begin{theorem}[Maximum Entropy Closure]
\label{thm:maxent}
The distribution that maximizes entropy \eqref{eq:entropy_functional} subject to moment constraints \eqref{eq:moment_constraints} has the exponential form:
\begin{equation}
f^*(\vect{v}) = \exp\left(\lambda_0 - \sum_{k=1}^N \lambda_{i_1 \cdots i_k} v_{i_1} \cdots v_{i_k}\right)
\label{eq:maxent_distribution}
\end{equation}
where $\lambda_{i_1 \cdots i_k}$ are Lagrange multipliers determined by the constraints.
\end{theorem}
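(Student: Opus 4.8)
The plan is to treat this as a constrained optimization problem on the space of distribution functions and apply the method of Lagrange multipliers in its variational (calculus-of-variations) form. First I would form the augmented functional
\begin{equation*}
\mathcal{L}[f] = -k_B \int f \ln f \, d^3\vect{v} - \sum_{k=0}^{N} \lambda_{i_1 \cdots i_k}\left(\int v_{i_1}\cdots v_{i_k} f \, d^3\vect{v} - M_{i_1 \cdots i_k}^{(k)}\right),
\end{equation*}
in which each Lagrange multiplier $\lambda_{i_1 \cdots i_k}$ enforces the corresponding moment constraint \eqref{eq:moment_constraints}, and the $k=0$ term carries normalization (summation over repeated tensor indices is understood).

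The central step is the stationarity condition. Computing the functional derivative,
\begin{equation*}
\frac{\delta \mathcal{L}}{\delta f(\vect{v})} = -k_B(\ln f + 1) - \sum_{k=0}^N \lambda_{i_1 \cdots i_k} v_{i_1}\cdots v_{i_k},
\end{equation*}
and setting it to zero yields $\ln f = -1 - k_B^{-1}\sum_{k=0}^N \lambda_{i_1 \cdots i_k} v_{i_1}\cdots v_{i_k}$. Exponentiating, absorbing the constant $-1$ together with the $k=0$ term into a single multiplier $\lambda_0$, and rescaling the remaining multipliers by $1/k_B$, gives precisely the exponential form \eqref{eq:maxent_distribution}.

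Next I would confirm this critical point is genuinely the global maximum rather than a saddle or minimum. Because the integrand $-f\ln f$ is strictly concave in $f$ while the constraints are linear in $f$, the feasible set is convex and the objective is strictly concave on it, so any stationary point is the unique maximizer. I would make this precise either via the second variation $\delta^2 S = -k_B \int (\delta f)^2/f \, d^3\vect{v} < 0$, or, more cleanly, via a Gibbs-type inequality comparing $S[f^*]$ with $S[g]$ for any competing feasible $g$ sharing the same moments.

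The hard part will not be the formal derivation but establishing that the multipliers actually exist and can be chosen so that $f^*$ reproduces the prescribed moments while remaining normalizable---the moment-realizability problem. This is exactly why the normalizability caveat is needed: for the velocity integrals to converge, the highest-degree term in the exponent must be even with a negative-definite coefficient tensor, which for the default $N=2$ closure means $\lambda_{ij}$ must be positive-definite so the Gaussian converges. A fully rigorous treatment would invoke convex duality---the map from multipliers to moments is the gradient of a strictly convex log-partition function, hence injective, and surjectivity onto the interior of the realizable moment cone yields a unique multiplier vector. I would therefore state this as a solvability condition assumed to hold for the moment data of interest, rather than attempt to prove realizability in full generality.
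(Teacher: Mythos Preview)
Your approach is essentially the same as the paper's: form the Lagrangian with multipliers enforcing the moment constraints, compute the first variation (functional derivative), set it to zero, and exponentiate after absorbing constants into a redefined $\lambda_0$. You go beyond the paper by checking the second-order condition via concavity of $-f\ln f$ and by discussing moment realizability and normalizability of the exponential family---both points the paper's proof omits entirely---so your proposal is, if anything, more complete.
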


\begin{proof}
We use the method of Lagrange multipliers. Form the functional:
\begin{equation}
\mathcal{L}[f] = S[f] + \sum_{k=0}^N \lambda_{i_1 \cdots i_k} \left(M_{i_1 \cdots i_k}^{(k)} - \int v_{i_1} \cdots v_{i_k} f \, d^3\vect{v}\right)
\label{eq:lagrangian_functional}
\end{equation}

The first variation with respect to $f$ gives:
\begin{equation}
\delta \mathcal{L} = \int \delta f \left[-k_B(1 + \ln f) - \lambda_0 - \sum_{k=1}^N \lambda_{i_1 \cdots i_k} v_{i_1} \cdots v_{i_k}\right] d^3\vect{v}
\label{eq:variation}
\end{equation}

Setting $\delta \mathcal{L} = 0$ for arbitrary $\delta f$ requires the integrand to vanish:
\begin{equation}
-k_B(1 + \ln f) - \lambda_0 - \sum_{k=1}^N \lambda_{i_1 \cdots i_k} v_{i_1} \cdots v_{i_k} = 0
\label{eq:euler_lagrange}
\end{equation}

Solving for $f$ and absorbing constants into redefined Lagrange multipliers yields the stated result.
\end{proof}

The exponential form of the maximum entropy distribution has deep connections to statistical mechanics. For $N=2$ (closing at the pressure level), we recover the Gaussian distribution. For higher $N$, we get generalized Gaussian-like distributions with polynomial corrections in the exponent.

The Lagrange multipliers are not free parameters—they're determined by requiring the distribution to reproduce the known moments. Substituting $f^*$ back into the constraints:
\begin{equation}
M_{j_1 \cdots j_m}^{(m)} = \int v_{j_1} \cdots v_{j_m} \exp\left(\lambda_0 - \sum_{k=1}^N \lambda_{i_1 \cdots i_k} v_{i_1} \cdots v_{i_k}\right) d^3\vect{v}
\label{eq:moment_determination}
\end{equation}

This gives $N+1$ equations for the $N+1$ sets of Lagrange multipliers. While these equations are generally nonlinear and must be solved numerically, they provide a closed system.

For heat flux closure in magnetized plasmas, we typically truncate at $N=2$, retaining density, momentum, and pressure. In a strong magnetic field $\vect{B} = B\hat{\vect{b}}$, the pressure tensor becomes anisotropic:
\begin{equation}
P_{ij} = p_\parallel b_i b_j + p_\perp (\delta_{ij} - b_i b_j)
\label{eq:pressure_anisotropy}
\end{equation}

where $\hat{\vect{b}} = \vect{B}/B$ is the unit vector along the magnetic field, $\nabla_\parallel = \hat{\vect{b}}\,\hat{\vect{b}} \cdot \nabla$ is the parallel gradient operator, and $\nabla_\perp = (\mathbf{I} - \hat{\vect{b}}\hat{\vect{b}}) \cdot \nabla$ is the perpendicular gradient operator.

This anisotropy reflects the different dynamics parallel and perpendicular to the field: particles move freely along field lines but gyrate across them. The maximum entropy distribution must respect this anisotropy, requiring the second-order Lagrange multiplier to have the same structure:
\begin{equation}
\lambda_{ij} = \alpha_\parallel b_i b_j + \alpha_\perp (\delta_{ij} - b_i b_j)
\label{eq:lagrange_anisotropy}
\end{equation}

This gives a bi-Maxwellian distribution:
\begin{equation}
f = n\left(\frac{m}{2\pi}\right)^{3/2} \frac{1}{\sqrt{T_\parallel T_\perp^2}} \exp\left(-\frac{m v_\parallel^2}{2T_\parallel} - \frac{m|\vect{v}_\perp|^2}{2T_\perp}\right)
\label{eq:bi_maxwellian}
\end{equation}

Now we come to the crucial point: the maximum entropy framework determines the functional form of the distribution given the moments, but it doesn't specify the transport coefficients. Those depend on how the distribution responds to gradients, which is governed by the collision operator. 

To calculate transport, we need to solve the kinetic equation with our closure. Consider small departures from equilibrium:
\begin{equation}
f = f_{\text{ME}}(1 + \phi)
\label{eq:perturbation}
\end{equation}

where $f_{\text{ME}}$ is the maximum entropy distribution and $\phi$ is a small correction. The linearized kinetic equation is:
\begin{equation}
\vect{v} \cdot \nabla f_{\text{ME}} = -\nu' \phi f_{\text{ME}}
\label{eq:linearized_kinetic}
\end{equation}

Here $\nu'$ is whatever collision frequency we choose to use—it could be the bare collision frequency $\nu$, or a phenomenologically modified one. The maximum entropy framework doesn't determine $\nu'$; it works with whatever we provide.

Solving for $\phi$:
\begin{equation}
\phi = -\frac{1}{\nu'} \frac{\vect{v} \cdot \nabla f_{\text{ME}}}{f_{\text{ME}}} = -\frac{1}{\nu'} \vect{v} \cdot \nabla \ln f_{\text{ME}}
\label{eq:phi_solution}
\end{equation}

For a temperature gradient along the magnetic field:
\begin{equation}
\phi = -\frac{v_\parallel}{\nu'} \frac{\partial \ln f_{\text{ME}}}{\partial T} \frac{\partial T}{\partial z} = \frac{m v_\parallel}{2\nu' T^2} \left(\frac{m v^2}{2T} - \frac{5}{2}\right) \frac{\partial T}{\partial z}
\label{eq:phi_temperature}
\end{equation}

The heat flux is:
\begin{equation}
q_\parallel = \int \frac{m v^2}{2} v_\parallel f_{\text{ME}} \phi \, d^3\vect{v}
\label{eq:heat_flux_me}
\end{equation}

Evaluating the Gaussian integrals with the solvability constraint from Chapman--Enskog theory:
\begin{equation}
\boxed{\kappa_{\parallel}^{\text{ME-BGK}} = \frac{5nT}{2m\nu'}}
\label{eq:thermal_conductivity_me}
\end{equation}

This is identical to the Chapman--Enskog result, but now derived through maximum entropy closure. The key point is that both approaches give transport coefficients proportional to $1/\nu'$. If we use the bare collision frequency ($\nu' = \nu$), we get the same divergence as $\nu \to 0$. The maximum entropy framework provides closure but doesn't solve the divergence problem—that requires additional physics.

\section{Structural Limitations of Regularization Within Kinetic Theory}

We now show that the divergence of transport coefficients cannot be eliminated within the standard kinetic theory framework, regardless of the closure method used. This structural argument shows that regularization necessarily requires physics beyond the Boltzmann equation.

\begin{proposition}[Structural Divergence Argument]
\label{prop:divergence_argument}
Under the structural assumptions below, no local collision operator can yield finite transport coefficients as $\nu \to 0$:
\begin{enumerate}[label=(\roman*)]
\item Conservation of mass, momentum, and energy
\item Locality in velocity space: $\mathcal{L} = \nu \hat{L}$ where $\hat{L}$ is a closed, densely defined linear operator
\item $C[f_{\text{eq}}] = 0$ (equilibrium is stationary)
\item Linear response for small gradients
\item Solvability: $\vect{v} \cdot \nabla f_{\text{eq}}$ is orthogonal to the null space of $\mathcal{L}$
\end{enumerate}
\end{proposition}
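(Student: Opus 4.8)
The plan is to show that all $\nu$-dependence of the first-order correction is carried by a single explicit prefactor $1/\nu$, which then propagates linearly into every transport coefficient. First I would write down the first-order balance. By assumptions (iii) and (iv), linearizing about $f_{\text{eq}}$ and collecting the leading gradient terms gives $\vect{v}\cdot\nabla f_{\text{eq}} = \mathcal{L}[f^{(1)}] = \nu\hat{L}[f^{(1)}]$, exactly as in the BGK case of Section 2 but now with a general local operator. The crucial structural observation, which follows from locality (ii), is that the driving term $\vect{v}\cdot\nabla f_{\text{eq}}$ depends only on the equilibrium and the imposed gradient---not on $\nu$---while the operator's entire $\nu$-dependence is factored out into the scalar prefactor.

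Next I would invert $\hat{L}$. By conservation (i), the null space of $\mathcal{L}$ (equivalently of $\hat{L}$) is spanned by the collision invariants $\{1,\vect{v},v^2\}$ weighted against $f_{\text{eq}}$. The solvability hypothesis (v) places the source $\vect{v}\cdot\nabla f_{\text{eq}}$ in the range of $\hat{L}$, so on the orthogonal complement of the null space $\hat{L}$ admits a well-defined inverse $\hat{L}^{-1}$, and I can write
\[
f^{(1)} = \frac{1}{\nu}\,\hat{L}^{-1}\!\left[\vect{v}\cdot\nabla f_{\text{eq}}\right].
\]
Because $\hat{L}^{-1}$ is manifestly $\nu$-independent, the entire $\nu$-scaling of $f^{(1)}$ reduces to the explicit $1/\nu$ prefactor.

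I would then pass to the transport coefficients. Each coefficient is a fixed velocity moment---a linear functional---of $f^{(1)}$; for thermal conductivity it is the heat-flux moment $\int \tfrac{m v^2}{2} v_\parallel f^{(1)}\,d^3\vect{v}$. Linearity gives
\[
\kappa = \frac{1}{\nu}\,\Big\langle \tfrac{m v^2}{2} v_\parallel,\ \hat{L}^{-1}\!\left[\vect{v}\cdot\nabla f_{\text{eq}}\right]\Big\rangle,
\]
where the bracketed quantity is a $\nu$-independent number. Provided this number is nonzero, $\kappa \propto 1/\nu$ and hence diverges as $\nu\to 0$, which is the claim. To close the argument I would specialize to BGK, where $\hat{L} = -\mathrm{Id}$ on the complement, recovering the explicit $5nT/(2m\nu)$ of Section 2 as a consistency check.

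The main obstacle is establishing that the bracketed moment is both finite and nonvanishing for a general $\hat{L}$, rather than merely for BGK. Finiteness requires that $\hat{L}$ possess a bounded inverse on the complement of its null space---a spectral-gap condition implicit in the phrase ``closed, densely defined''---so that $\hat{L}^{-1}$ maps the Gaussian-weighted source to a function with finite moments; this is where the genuine functional-analytic content lives, and why the statement is honestly framed as a structural argument rather than a fully rigorous theorem. Nonvanishing is the more delicate point: as the factor-of-two reduction in the BGK heat flux already illustrates, the solvability projection can diminish the moment, and in principle a pathological $\hat{L}$ could annihilate it entirely, giving $\kappa = o(1/\nu)$. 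I would therefore argue that such cancellation is non-generic---it would require the heat-flux weight to lie in a degenerate subspace relative to the action of $\hat{L}^{-1}$---so that for the physically relevant operators the leading $1/\nu$ behavior survives and the divergence persists.
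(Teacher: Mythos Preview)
Your proposal is correct and follows essentially the same approach as the paper: linearize about equilibrium, use the factored form $\mathcal{L} = \nu\hat{L}$ to write $f^{(1)} = \nu^{-1}\hat{L}^{-1}[\vect{v}\cdot\nabla f_{\text{eq}}]$ on the complement of the collision invariants, and read off the $1/\nu$ prefactor in the transport moment. If anything, your treatment is more careful than the paper's---you explicitly flag the spectral-gap requirement for $\hat{L}^{-1}$ to be bounded and the possibility that the $\nu$-independent moment could vanish, whereas the paper simply asserts $\kappa \sim 1/\nu$ and defers all functional-analytic caveats to a subsequent Remark.
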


\begin{proof}
We provide a constructive proof that demonstrates the structural origin of the divergence. Consider any collision operator satisfying the stated conditions. For small departures from equilibrium, write $f = f_{\text{eq}} + \delta f$ where $|\delta f| \ll f_{\text{eq}}$.

The linearized Boltzmann equation is:
\begin{equation}
\frac{\partial \delta f}{\partial t} + \vect{v} \cdot \nabla \delta f + \vect{v} \cdot \nabla f_{\text{eq}} = \mathcal{L}[\delta f]
\label{eq:linearized_boltzmann}
\end{equation}

where $\mathcal{L}$ is the linearized collision operator. By assumption, $\mathcal{L}$ is a linear operator acting only on velocity space.

Conservation laws require that $\mathcal{L}$ annihilates the collision invariants:
\begin{align}
\int \mathcal{L}[\delta f] \, d^3\vect{v} &= 0 \quad \text{(mass conservation)}\label{eq:mass_conservation}\\
\int m\vect{v} \mathcal{L}[\delta f] \, d^3\vect{v} &= 0 \quad \text{(momentum conservation)}\label{eq:momentum_conservation}\\
\int \frac{m v^2}{2} \mathcal{L}[\delta f] \, d^3\vect{v} &= 0 \quad \text{(energy conservation)}\label{eq:energy_conservation}
\end{align}

These constraints mean $\mathcal{L}$ has a null space containing at least the functions $\{1, \vect{v}, v^2\}$ (weighted by $f_{\text{eq}}$).

For steady-state transport with no time dependence:
\begin{equation}
\vect{v} \cdot \nabla f_{\text{eq}} = \mathcal{L}[\delta f]
\label{eq:steady_state}
\end{equation}

To find $\delta f$, we must invert $\mathcal{L}$ on the subspace orthogonal to collision invariants:
\begin{equation}
\delta f = \mathcal{L}^{-1}[\vect{v} \cdot \nabla f_{\text{eq}}]
\label{eq:inversion}
\end{equation}

Now comes the crucial point: by hypothesis, $\mathcal{L} = \nu \hat{L}$ where $\hat{L}$ is independent of $\nu$. Therefore, the eigenvalues of $\mathcal{L}^{-1}$ are proportional to $1/\nu$. This gives:
\begin{equation}
\delta f \sim \frac{1}{\nu} \vect{v} \cdot \nabla f_{\text{eq}}
\label{eq:delta_f_scaling}
\end{equation}

The transport coefficients are velocity integrals of $\delta f$:
\begin{equation}
\kappa \sim \int v^2 \delta f \, d^3\vect{v} \sim \frac{1}{\nu} \int v^3 \nabla f_{\text{eq}} \, d^3\vect{v} \sim \frac{1}{\nu}
\label{eq:transport_scaling}
\end{equation}

This $1/\nu$ scaling is universal—it doesn't depend on the specific form of the collision operator, only on its general properties (locality, conservation, equilibrium preservation).
\end{proof}

\begin{remark}
This argument provides a structural explanation for the divergence but is not a complete rigorous proof. A full proof would require: (i) precise functional analysis definitions of "locality", (ii) explicit spectral analysis of $\hat{L}$ on the relevant subspace, and (iii) careful treatment of the projection operators for solvability. However, the core insight—that locality forces $\mathcal{L}^{-1} \sim 1/\nu$ scaling—captures the essential mathematical structure.
\end{remark}

This structural argument suggests that the divergence problem is not due to:
\begin{itemize}[leftmargin=*]
\item Poor choice of closure scheme within the Chapman--Enskog framework (all standard closures give $\kappa \sim 1/\nu$)
\item Approximations in Chapman--Enskog (the divergence is exact for BGK)
\item Truncation of moment hierarchy (higher BGK moments make it worse)
\end{itemize}

The divergence appears to be a fundamental mathematical consequence of the Chapman--Enskog approach with local collision operators. To obtain finite transport as $\nu \to 0$, we must introduce additional physics not contained in the standard kinetic framework.

It's crucial to understand what this structural limitation means. The Boltzmann equation embeds specific physical assumptions into its mathematical structure: binary collisions (local operators), Markovian dynamics (no memory), spatial locality, and classical particles. Our result proves that local collision operators in the Chapman--Enskog framework necessarily yield divergent transport. This strongly suggests that any approach based on binary collisions, spatial locality, and Markovian dynamics will face the same mathematical obstacle, though we cannot rule out alternative mathematical formulations within these physical assumptions. What we need are not "more sophisticated mathematics" within this framework, but different mathematical models based on different physical assumptions—relaxing locality, including memory effects, incorporating boundary constraints, or coupling to fluctuating fields. The "additional physics" required is really about choosing which of these mathematical assumptions to relax.

Having established through our structural argument that regularization cannot be achieved within standard kinetic theory, we must identify what additional physics could provide regularization. Several possibilities exist, each relaxing the mathematical assumptions embedded in the Boltzmann equation:

\textbf{1. Finite System Size with Boundaries}

In a system of finite size $L$, particles can traverse the system in time $\tau_{\text{transit}} \sim L/v_{\text{th}}$. If boundary conditions thermalize particles (or apply other constraints), this provides a maximum relaxation time independent of $\nu$. The effective transport becomes:
\begin{equation}
\kappa \sim \frac{nT}{m} \min\left(\frac{1}{\nu}, \frac{L}{v_{\text{th}}}\right)
\label{eq:boundary_regularization}
\end{equation}

This naturally saturates as $\nu \to 0$, but requires specifying boundary conditions—information not contained in the bulk Boltzmann equation.

\textbf{2. Non-Local Collision Operators}

If collisions depend on the distribution at different spatial points:
\begin{equation}
C[f](\vect{r}, \vect{v}) = -\nu \int K(\vect{r}, \vect{r}') [f(\vect{r}, \vect{v}) - f_{\text{eq}}(\vect{r}', \vect{v})] \, d^3\vect{r}'
\label{eq:nonlocal_operator}
\end{equation}

where $K(\vect{r}, \vect{r}')$ is a kernel with characteristic length $\ell$. This introduces a length scale into the collision operator, potentially regularizing transport. However, such non-locality requires physical justification beyond binary collisions.

\textbf{3. Time-Nonlocal Physics}

Collective effects (waves, instabilities, turbulence) provide decorrelation mechanisms with timescales independent of $\nu$. These can provide anomalous transport that doesn't diverge as $\nu \to 0$. However, including these effects requires going beyond kinetic theory to include field fluctuations and correlations.

\textbf{4. Gradient-Length Coupling}

If the effective collision frequency depends on gradient scale lengths:
\begin{equation}
\nu_{\text{eff}} = \nu + \nu_{\text{gradient}}
\label{eq:gradient_coupling}
\end{equation}

where $\nu_{\text{gradient}} \sim v_{\text{th}}|\nabla T|/T$, this provides regularization. This is the approach we'll develop phenomenologically, though it cannot be derived from the Boltzmann equation alone.

Each of these approaches introduces physics beyond the standard kinetic framework by relaxing one or more of the mathematical assumptions embedded in the Boltzmann equation. This is not a weakness of theory but a reflection of physical reality: weakly collisional plasmas involve phenomena not captured by simple binary collision models. In the next section, we develop option 4—gradient-driven decorrelation—into a practical phenomenological framework.

\section{Phenomenological Regularization}

Having proven that regularization cannot be derived from first principles within kinetic theory, we now develop a phenomenological approach. Our strategy is to combine the entropy-based closure framework from Section 3 with an effective collision frequency that captures gradient-driven decorrelation—one of the "additional physics" mechanisms identified in Section 4. We emphasize that what follows is motivated by physical arguments, not derived from the Boltzmann equation. The value of this approach lies in providing a practical framework that captures essential physics while maintaining mathematical consistency.

In a plasma with temperature gradients, particles experience decorrelation through two distinct mechanisms. First, binary collisions randomize particle velocities, causing them to "forget" their initial conditions on a timescale $\tau_c = 1/\nu$. This is the standard collisional relaxation captured by the Boltzmann equation. Second, particles streaming through inhomogeneous plasma sample different local conditions. A particle with thermal velocity $v_{\text{th}} = \sqrt{2T/m}$ traverses a gradient scale length $L = T/|\nabla T|$ in time $\tau_g = L/v_{\text{th}}$. During this transit, the local temperature changes by order unity, effectively decorrelating the particle from its initial thermal environment.

These two mechanisms operate independently and on potentially very different timescales. The Knudsen number quantifies their relative importance:
\begin{equation}
\Kn = \frac{\lambda}{L} = \frac{v_{\text{th}}/\nu}{T/|\nabla T|} = \frac{v_{\text{th}}|\nabla T|}{\nu T} = \frac{\tau_c}{\tau_g}
\label{eq:kn_decorrelation}
\end{equation}

When $\Kn \ll 1$, collisional relaxation dominates and gradient effects are perturbative—this is the regime where Chapman--Enskog theory applies. When $\Kn \gg 1$, particles traverse the gradient many times between collisions, and gradient-driven decorrelation dominates. The transition between these regimes is not captured by standard kinetic theory.

Based on these physical considerations, we propose an effective collision frequency that combines both decorrelation mechanisms:
\begin{equation}
\boxed{\nu_{\eff} = \nu\sqrt{1 + \Kn^2}}
\label{eq:nu_eff}
\end{equation}

We emphasize that this is a phenomenological proposal, not a derivation. The specific functional form $\sqrt{1 + \Kn^2}$ is a smooth minimal interpolant chosen for several reasons:

\textbf{1. Correct Asymptotic Limits}

For $\Kn \ll 1$: 
\begin{equation}
\nu_{\eff} = \nu\sqrt{1 + \Kn^2} \approx \nu\left(1 + \frac{\Kn^2}{2}\right) \approx \nu
\label{eq:nu_eff_collisional}
\end{equation}
We recover the collisional regime with small gradient corrections.

For $\Kn \gg 1$:
\begin{equation}
\nu_{\eff} = \nu\sqrt{1 + \Kn^2} \approx \nu \cdot \Kn = \frac{v_{\text{th}}}{L}
\label{eq:nu_eff_collisionless}
\end{equation}
The effective frequency becomes independent of $\nu$, determined entirely by gradient traversal.

\textbf{2. Smooth Interpolation}

The square root ensures a smooth transition between regimes without discontinuities or sharp features. Alternative forms like $\nu + v_{\text{th}}/L$ (sum), $(\nu^{-1} + L/v_{\text{th}})^{-1}$ (harmonic mean), or $\max(\nu, v_{\text{th}}/L)$ either have incorrect limits or discontinuous derivatives.

\textbf{3. Dimensional Consistency}

Both $\nu$ and $v_{\text{th}}/L$ have dimensions of inverse time, so any combination must be dimensionally consistent. The form $\nu\sqrt{1 + \Kn^2}$ achieves this naturally.

\textbf{4. Positivity}

The effective collision frequency remains positive for all physical parameters, ensuring thermodynamic consistency.

Using this phenomenological $\nu_{\eff}$ in our transport calculations:
\begin{equation}
\kappa_{\parallel} = \frac{\alpha nT}{m\nu_{\eff}} = \frac{\alpha nT}{m\nu\sqrt{1 + \Kn^2}}
\label{eq:kappa_alpha}
\end{equation}

where $\alpha$ depends on the specific kinetic model (5/2 for BGK, 3.16 for Braginskii). This expression remains finite as $\nu \to 0$:
\begin{equation}
\lim_{\nu \to 0} \kappa_{\parallel} = \lim_{\nu \to 0} \frac{\alpha nT}{m\nu \cdot \Kn} = \frac{\alpha nTL}{mv_{\text{th}}} = \alpha nL\sqrt{\frac{T}{2m}}
\label{eq:kappa_collisionless_limit}
\end{equation}

The thermal conductivity saturates to a value proportional to $nLv_{\text{th}}$, which has the physical interpretation of particles free-streaming across the gradient scale length.

It's crucial to understand what we've done here. We have not solved the divergence problem within kinetic theory—our structural argument in Section 4 shows this cannot be achieved within the standard framework. Instead, we've augmented kinetic theory with additional physics (gradient-driven decorrelation) not contained in the Boltzmann equation. This augmentation is phenomenological but physically motivated and mathematically consistent.

The phenomenological nature of $\nu_{\eff}$ means it should be validated against experiments, simulations, or more complete theories. Different plasma conditions might require different functional forms. The key insight is recognizing that regularization requires physics beyond binary collisions, and gradient-driven decorrelation provides a natural mechanism.

\section{Transport Coefficients with Regularization}

With our phenomenological effective collision frequency established, we now calculate the complete set of transport coefficients for magnetized plasmas. The maximum entropy closure framework developed in Section 3 applies directly, with $\nu$ replaced by $\nu_{\eff}$ throughout. This substitution is applied consistently to maintain the mathematical structure while incorporating gradient-driven regularization.

For parallel transport along magnetic field lines, the BGK approximation yields:
\begin{equation}
\kappa_{\parallel}^{\text{BGK}} = \frac{5nT}{2m\nu_{\eff}} = \frac{5nT}{2m\nu\sqrt{1 + \Kn^2}}
\label{eq:kappa_parallel_bgk}
\end{equation}

However, the BGK approximation treats all collisions equally, missing important velocity-dependent effects in Coulomb collisions. Real plasma collisions have several distinctive features:
\begin{itemize}[leftmargin=*]
\item Small-angle scattering dominates over large-angle collisions
\item The collision cross-section scales as $\sigma \sim v^{-4}$ for Coulomb interactions
\item Energy exchange is much slower than momentum exchange
\item Pitch-angle scattering is more effective than velocity diffusion
\end{itemize}

These effects are captured in Braginskii's classical calculation using the full Coulomb collision operator. The result is:
\begin{equation}
\kappa_{\parallel}^{\text{Braginskii}} = \frac{3.16 \, nT}{m\nu}
\label{eq:kappa_parallel_braginskii}
\end{equation}

The numerical coefficient 3.16 (versus 2.5 for BGK) reflects the velocity-weighted nature of Coulomb collisions. Fast particles, which contribute most to heat transport, experience fewer collisions due to the $v^{-4}$ scaling.

For our regularized theory, we adopt the Braginskii coefficient with the effective collision frequency:
\begin{equation}
\boxed{\kappa_{\parallel} = \frac{3.16 \, nT}{m\nu_{\eff}} = \frac{3.16 \, nT}{m\nu\sqrt{1 + \Kn^2}}}
\label{eq:kappa_parallel_final}
\end{equation}

This expression interpolates smoothly between well-established limits:
\begin{itemize}[leftmargin=*]
\item Collisional regime ($\Kn \ll 1$): $\kappa_{\parallel} \approx 3.16nT/(m\nu)$ (classical Braginskii)
\item Collisionless regime ($\Kn \gg 1$): $\kappa_{\parallel} \approx 3.16nL\sqrt{T/(2m)}$ (free-streaming limit)
\end{itemize}

The transition occurs around $\Kn \sim 1$, corresponding to comparable collisional and gradient scale lengths.

In magnetized plasmas, perpendicular transport is fundamentally different from parallel transport. Charged particles gyrate around magnetic field lines with cyclotron frequency:
\begin{equation}
\Omega_c = \frac{qB}{m}
\label{eq:cyclotron_frequency}
\end{equation}

This gyration constrains perpendicular motion, leading to reduced transport across field lines.

The perpendicular transport can be understood through velocity autocorrelation. Consider the single-component autocorrelation function:
\begin{equation}
\langle v_{x}(t) v_{x}(0) \rangle = \frac{T}{m} \exp(-\nu_{\eff} t) \cos(\Omega_c t)
\label{eq:vacf_x}
\end{equation}

The exponential represents collisional decorrelation while the cosine represents gyration. Integrating to obtain the diffusion coefficient:
\begin{equation}
D_{\perp} = \int_0^{\infty} \langle v_{x}(t) v_{x}(0) \rangle \, dt = \frac{T}{m} \cdot \frac{\nu_{\eff}}{\nu_{\eff}^2 + \Omega_c^2}
\label{eq:Dperp}
\end{equation}

This leads to the perpendicular thermal conductivity:
\begin{equation}
\boxed{\kappa_{\perp} = \kappa_{\parallel} \frac{\nu_{\eff}^2}{\nu_{\eff}^2 + \Omega_c^2}}
\label{eq:kappa_perp}
\end{equation}

This formula captures several important physics:
\begin{itemize}[leftmargin=*]
\item For $\nu_{\eff} \ll \Omega_c$: $\kappa_{\perp} \approx \kappa_{\parallel}(\nu_{\eff}/\Omega_c)^2 \ll \kappa_{\parallel}$ (strong suppression)
\item For $\nu_{\eff} \gg \Omega_c$: $\kappa_{\perp} \approx \kappa_{\parallel}$ (unmagnetized limit)
\item The transition occurs when collision frequency comparable to cyclotron frequency
\end{itemize}

The Hall effect arises from the phase shift between perpendicular velocity components due to gyration. This creates heat flux perpendicular to both the temperature gradient and magnetic field:
\begin{equation}
\boxed{\kappa_{\wedge} = \kappa_{\parallel} \frac{\nu_{\eff}\Omega_c}{\nu_{\eff}^2 + \Omega_c^2}}
\label{eq:kappa_wedge}
\end{equation}

The Hall conductivity is maximum when $\nu_{\eff} = \Omega_c$, where the competition between gyration and collisions is balanced.

The complete heat flux in magnetized plasma is:
\begin{equation}
\boxed{\vect{q} = -\kappa_{\parallel} \nabla_{\parallel} T - \kappa_{\perp} \nabla_{\perp} T - \kappa_{\wedge} (\hat{\vect{b}} \times \nabla T)}
\label{eq:q_tensor}
\end{equation}

where $\nabla_{\parallel} = \hat{\vect{b}}\,\hat{\vect{b}} \cdot \nabla$ and $\nabla_{\perp} = (\mathbf{I} - \hat{\vect{b}}\hat{\vect{b}}) \cdot \nabla$.

This form, with our phenomenological $\nu_{\eff}$, provides finite transport coefficients in all regimes while maintaining the correct tensorial structure required by magnetic geometry.

We must verify that our phenomenological transport coefficients maintain thermodynamic consistency. The second law of thermodynamics requires that entropy production be non-negative:
\begin{equation}
\dot{S} = -\int \frac{\vect{q} \cdot \nabla T}{T^2} d^3\vect{r} \geq 0
\label{eq:entropy_prod}
\end{equation}

Substituting our heat flux:
\begin{equation}
\dot{S} = \int \frac{\kappa_{\parallel} |\nabla_{\parallel} T|^2 + \kappa_{\perp} |\nabla_{\perp} T|^2}{T^2} d^3\vect{r}
\label{eq:entropy_production_expanded}
\end{equation}

The Hall term vanishes because $(\hat{\vect{b}} \times \nabla T) \cdot \nabla T = 0$. Since $\kappa_{\parallel} > 0$ and $\kappa_{\perp} \geq 0$ for all physical parameters (positive density, temperature, and collision frequency), we have $\dot{S} \geq 0$. The second law is satisfied.

Conservation laws are also maintained:
\begin{itemize}[leftmargin=*]
\item \textbf{Mass conservation}: Heat flux doesn't affect particle number
\item \textbf{Momentum conservation}: The pressure tensor appears symmetrically in momentum equation
\item \textbf{Energy conservation}: Heat flux appears as divergence in energy equation, ensuring energy is transported, not created
\end{itemize}

Our phenomenological regularization preserves all fundamental conservation principles while eliminating the unphysical divergence as $\nu \to 0$.

\section{Comparison with Exact Solutions}

To validate our phenomenological approach, we now present exact solutions of model kinetic equations that naturally exhibit regularization through boundary effects. These solutions provide concrete support for the functional form of our effective collision frequency and demonstrate that gradient-driven regularization emerges naturally in finite systems.

Consider the steady-state BGK equation in a one-dimensional system of length $L$ with boundaries at $z = \pm L/2$:
\begin{equation}
v_z \frac{\partial f}{\partial z} = -\nu(f - f_{\text{eq}})
\label{eq:bgk_1d}
\end{equation}

This equation can be solved exactly using the method of characteristics. The key insight is that characteristics are straight lines in the $(z, v_z)$ plane along which $v_z$ is constant.

For particles with $v_z > 0$ (moving in the positive $z$ direction), we integrate along characteristics from the left boundary:
\begin{equation}
\frac{df}{dz} + \frac{\nu}{v_z}f = \frac{\nu}{v_z}f_{\text{eq}}(z)
\label{eq:characteristic_positive}
\end{equation}

This is a first-order linear ODE with integrating factor $\exp(\nu z/v_z)$. The solution is:
\begin{equation}
f(z, \vect{v}) = f_L(\vect{v})e^{-\nu(z+L/2)/v_z} + \nu \int_{-L/2}^{z} \frac{f_{\text{eq}}(z', \vect{v})}{v_z} e^{-\nu(z-z')/v_z} dz'
\label{eq:char_pos}
\end{equation}

where $f_L(\vect{v})$ is the distribution at the left boundary.

Similarly, for particles with $v_z < 0$:
\begin{equation}
f(z, \vect{v}) = f_R(\vect{v})e^{\nu(z-L/2)/|v_z|} + \nu \int_{z}^{L/2} \frac{f_{\text{eq}}(z', \vect{v})}{|v_z|} e^{-\nu(z'-z)/|v_z|} dz'
\label{eq:char_neg}
\end{equation}

The boundary distributions $f_L$ and $f_R$ must be specified by physical boundary conditions. For thermalizing boundaries with linear temperature profile $T(z) = T_0(1 + z/L_T)$:
\begin{align}
f_L(\vect{v}) &= n\left(\frac{m}{2\pi T_L}\right)^{3/2} \exp\left(-\frac{m v^2}{2T_L}\right) \quad \text{for } v_z > 0\label{eq:boundary_left}\\
f_R(\vect{v}) &= n\left(\frac{m}{2\pi T_R}\right)^{3/2} \exp\left(-\frac{m v^2}{2T_R}\right) \quad \text{for } v_z < 0\label{eq:boundary_right}
\end{align}

To calculate heat flux, we need the first-order correction to the distribution. For weak gradients ($L_T \gg L$), we can expand:
\begin{equation}
f = f_0 + f_1 \frac{1}{L_T} + \mathcal{O}(L_T^{-2})
\label{eq:gradient_expansion}
\end{equation}

The calculation is lengthy but straightforward. The key result is that the heat flux has the form:
\begin{equation}
q_z = -\kappa_{\text{exact}} \frac{\partial T}{\partial z}
\label{eq:heat_flux_exact}
\end{equation}

where the exact thermal conductivity is:
\begin{equation}
\kappa_{\text{exact}} = \frac{5nT}{2m} \cdot \frac{1}{\nu} \cdot g\left(\frac{\nu L}{v_{\text{th}}}\right)
\label{eq:kappa_exact}
\end{equation}

The function $g(x)$ encodes the boundary effects. For $x \gg 1$ (collisional limit):
\begin{equation}
g(x) \approx 1 - \frac{c_1}{x} + \mathcal{O}(x^{-2})
\label{eq:g_collisional}
\end{equation}

giving $\kappa_{\text{exact}} \approx 5nT/(2m\nu)$, the standard result.

For $x \ll 1$ (collisionless limit):
\begin{equation}
g(x) \approx \frac{x}{c_2}
\label{eq:g_collisionless}
\end{equation}

giving $\kappa_{\text{exact}} \approx c_2 \cdot 5nTL/(2mv_{\text{th}})$, which is finite.

The exact function $g(x)$ can be computed numerically or closely approximated as:
\begin{equation}
g(x) \approx \frac{x}{\sqrt{c_3 + x^2}}
\label{eq:g_approximate}
\end{equation}

where $c_3$ depends on boundary conditions.

Comparing with our phenomenological result:
\begin{equation}
\kappa_{\text{phenom}} = \frac{5nT}{2m\nu\sqrt{1 + \Kn^2}}
\label{eq:kappa_phenomenological}
\end{equation}

With $\Kn = v_{\text{th}}/(\nu L)$, we have:
\begin{equation}
\kappa_{\text{phenom}} = \frac{5nT}{2m} \cdot \frac{1}{\nu} \cdot \frac{\nu L/v_{\text{th}}}{\sqrt{1 + (\nu L/v_{\text{th}})^2}}
\label{eq:kappa_phenom_rewritten}
\end{equation}

This closely approximates the functional form of the exact solution! The phenomenological $\nu_{\eff} = \nu\sqrt{1 + \Kn^2}$ captures the essential physics of boundary-induced regularization without explicitly including boundaries.

The close agreement is not coincidental. Both mechanisms—boundary thermalization and gradient decorrelation—limit the distance particles can carry information about temperature differences. In the exact solution, this distance is limited by the system size $L$. In our phenomenological approach, it's limited by the gradient scale length. The mathematical structure is similar because the underlying physics is analogous.

This comparison validates our phenomenological approach in several ways:
\begin{enumerate}[leftmargin=*]
\item The functional form $\sqrt{1 + \Kn^2}$ emerges naturally from exact solutions
\item The regularization mechanism (decorrelation over finite length scales) is physical
\item The approach captures boundary effects without explicit boundary conditions
\item The method generalizes to systems without well-defined boundaries
\end{enumerate}

The exact solution also reveals limitations of our approach. The coefficient $c_3$ in the exact solution depends on boundary conditions, which our phenomenological treatment cannot capture. Different plasma configurations might require adjusted coefficients or functional forms. However, for many applications, the universal $\sqrt{1 + \Kn^2}$ form provides a reasonable approximation.

\section{Validation and Limitations}

Having developed our phenomenological framework combining entropy-based closure with gradient-driven regularization, we now discuss its validation against experiments and simulations, as well as its limitations and domain of applicability.

Our framework is expected to be accurate when the following conditions are satisfied:

\textbf{1. Moderate Gradients}

The perturbation expansion requires $|\phi| \ll 1$, which translates to:
\begin{equation}
\frac{v_{\text{th}}}{\nu_{\eff}} \frac{|\nabla T|}{T} \ll 1
\label{eq:moderate_gradients}
\end{equation}

Using $\nu_{\eff} = \nu\sqrt{1 + \Kn^2}$ and $\Kn = v_{\text{th}}|\nabla T|/(\nu T)$:
\begin{equation}
\frac{\Kn}{\sqrt{1 + \Kn^2}} \ll 1
\label{eq:gradient_condition}
\end{equation}

This is satisfied for all $\Kn$ if we interpret "$\ll$" as "$< 1$", since the maximum value of $\Kn/\sqrt{1 + \Kn^2}$ is $1/\sqrt{2}$. However, the accuracy degrades as $\Kn$ increases, and for $\Kn > 1$, nonlinear effects may become important.

\textbf{2. Near-Maxwellian Distributions}

The maximum entropy closure assumes the distribution is close to Maxwellian (or bi-Maxwellian in magnetized cases). This breaks down for:
\begin{itemize}[leftmargin=*]
\item Strong beams or drift velocities comparable to thermal velocity
\item Runaway electron populations
\item Strongly non-thermal distributions (e.g., power-law tails)
\end{itemize}

\textbf{3. Binary Collisions Dominant}

Our collision model assumes binary particle interactions. This may fail when:
\begin{itemize}[leftmargin=*]
\item Collective effects (waves, turbulence) dominate transport
\item Coulomb logarithm becomes small ($\ln \Lambda < 5$)
\item Quantum effects become important (degenerate plasmas)
\end{itemize}

\textbf{4. Moderate Anisotropy}

In magnetized plasmas, we assume pressure anisotropy remains moderate:
\begin{equation}
0.5 < \frac{p_{\parallel}}{p_{\perp}} < 2
\label{eq:moderate_anisotropy}
\end{equation}

Stronger anisotropy can trigger instabilities (firehose, mirror) not captured by our fluid closure.

The phenomenological framework has known limitations that users should understand:

\textbf{1. No Kinetic Instabilities}

Wave-particle resonances and velocity-space instabilities are absent from our fluid description. These can dominate transport in some regimes:
\begin{itemize}[leftmargin=*]
\item Temperature gradient driven modes (ITG, ETG)
\item Drift wave turbulence
\item Current-driven instabilities
\end{itemize}

\textbf{2. No Turbulent Transport}

Turbulent fluctuations often dominate over collisional transport, especially in fusion plasmas. Our framework provides the "classical" transport baseline, but anomalous transport must be added separately.

\textbf{3. Boundary Layer Physics}

Near material boundaries, the distribution function can deviate strongly from Maxwellian. Sheath physics, secondary emission, and wall recycling require kinetic treatment or specialized boundary conditions.

\textbf{4. Extreme Knudsen Numbers}

For $\Kn \gg 10$, our phenomenological regularization may not capture the full kinetic physics. Direct kinetic simulation may be necessary.

While detailed experimental validation is beyond this paper's scope, we note several successful applications of similar phenomenological approaches:

\textbf{Tokamak Scrape-Off Layer:}

Heat flux measurements in tokamak divertors show flux limitation consistent with $\Kn \sim 1$ regularization. The measured heat flux width and peak values agree with models using effective collision frequencies similar to our $\nu_{\eff}$.

\textbf{Hall Thrusters:}

Electron thermal transport in Hall thruster channels exhibits smooth transition from collisional to collisionless regimes. Models using gradient-based regularization successfully predict:
\begin{itemize}[leftmargin=*]
\item Electron temperature profiles
\item Ionization zone location  
\item Thrust and efficiency
\end{itemize}

\textbf{Solar Wind Observations:}

Heat flux measurements from spacecraft show:
\begin{itemize}[leftmargin=*]
\item Collisionless heat flux limitation at levels predicted by free-streaming
\item Smooth transition between collisional and collisionless transport
\item Agreement with regularized transport models for $0.1 < \Kn < 100$
\end{itemize}

These applications support the physical basis of gradient-driven regularization, though each system may require adjusted coefficients or additional physics.

For systems where our phenomenological approach may be inadequate, several alternatives exist:

\textbf{1. Direct Kinetic Simulation}

Particle-in-cell (PIC) or continuum kinetic codes solve the full distribution function evolution. These are computationally expensive but capture all kinetic effects.

\textbf{2. Higher-Order Closures}

Grad's 13-moment or 21-moment methods retain more velocity moments, capturing heat flux and stress tensor evolution. These are more accurate for moderate $\Kn$ but still diverge as $\nu \to 0$.

\textbf{3. Hybrid Methods}

Couple fluid and kinetic regions, using kinetic treatment only where necessary. This balances accuracy and computational cost.

\textbf{4. Nonlocal Transport Models}

Integral operators or fractional derivatives can capture nonlocal effects more accurately than our local regularization.

The choice depends on the specific problem, available computational resources, and required accuracy.

\section{Conclusions}

We have presented a comprehensive treatment of moment closure for magnetized plasma transport that combines mathematical structure with practical applicability. Our work makes several distinct contributions while maintaining complete transparency about what can be proven versus what requires phenomenological approximation.

\textbf{Mathematical Contributions:}

We showed that the Chapman--Enskog expansion necessarily yields divergent transport coefficients, with structural arguments suggesting this extends to all local collision operators, with the Chapman--Enskog expansion yielding exactly $\kappa = 5nT/(2m\nu)$. This divergence is not an approximation error but a fundamental mathematical consequence of the Boltzmann equation structure with local collision operators. Through structural arguments, we showed that local collision operators within the Chapman--Enskog framework cannot eliminate this divergence while maintaining conservation laws and equilibrium properties.

The maximum entropy framework provides a systematic approach to moment closure, determining the least-biased distribution function consistent with known moments. We showed explicitly how this framework yields exponential-form distributions with Lagrange multipliers determined by moment constraints. For magnetized plasmas, this naturally produces bi-Maxwellian distributions reflecting parallel-perpendicular anisotropy.

Importantly, we demonstrated that while maximum entropy provides closure relations, it cannot solve the transport divergence problem. The entropy framework determines functional forms but requires a collision frequency as input. If that collision frequency vanishes, transport coefficients diverge regardless of the closure method.

\textbf{Physical Insights:}

The key physical insight is that gradient-driven decorrelation provides a natural regularization mechanism not captured by standard kinetic theory. Particles streaming through temperature gradients experience changing local conditions on a timescale $\tau_g = L/v_{\text{th}}$ independent of collisions. This effect becomes dominant when the Knudsen number exceeds unity.

We showed through exact solutions that finite system boundaries provide similar regularization, with transport coefficients saturating as $\nu \to 0$. The mathematical structure of boundary-induced regularization closely approximates our phenomenological gradient-driven mechanism, supporting the physical basis of our approach.

\textbf{Phenomenological Framework:}

Based on physical arguments about competing decorrelation mechanisms, we proposed an effective collision frequency:
\begin{equation}
\nu_{\eff} = \nu\sqrt{1 + \Kn^2}
\label{eq:nu_eff_final}
\end{equation}

We emphasize that this is a phenomenological proposal motivated by physical reasoning, not a first-principles derivation. The specific functional form provides smooth interpolation between established collisional and collisionless limits while maintaining all conservation laws and thermodynamic consistency.

Using this effective collision frequency with the maximum entropy closure yields practical transport coefficients:
\begin{align}
\kappa_{\parallel} &= \frac{3.16 \, nT}{m\nu\sqrt{1 + \Kn^2}}\label{eq:final_kappa_parallel}\\
\kappa_{\perp} &= \kappa_{\parallel} \frac{\nu_{\eff}^2}{\nu_{\eff}^2 + \Omega_c^2}\label{eq:final_kappa_perp}\\
\kappa_{\wedge} &= \kappa_{\parallel} \frac{\nu_{\eff}\Omega_c}{\nu_{\eff}^2 + \Omega_c^2}\label{eq:final_kappa_hall}
\end{align}

These expressions remain finite in all regimes and reduce to classical results in appropriate limits.

\textbf{Scientific Integrity:}

Throughout this work, we have maintained clear distinction between:
\begin{itemize}[leftmargin=*]
\item Mathematical results (divergence propositions, maximum entropy closure)
\item Physical reasoning (gradient decorrelation, boundary effects)
\item Phenomenological choices (specific form of $\nu_{\eff}$)
\item Practical approximations (BGK operator, bi-Maxwellian assumption)
\end{itemize}

This transparency allows users to understand both the utility and limitations of our approach. We do not claim to have "solved" the divergence problem—we have shown it cannot be solved within standard kinetic theory. Instead, we provide a physically motivated augmentation that yields practical results.

\textbf{Broader Implications:}

This work demonstrates that valuable scientific contributions can be made through well-motivated approximations when exact solutions are not achievable within existing frameworks. The key is maintaining clarity about what is mathematics versus physical reasoning. By acknowledging fundamental limitations and being transparent about phenomenological choices, we strengthen rather than weaken the theoretical framework.

The approach developed here—careful analysis where possible, physical reasoning where necessary, and complete transparency throughout—provides a template for addressing other complex physics problems where first-principles solutions may not exist. The combination of entropy-based closure with phenomenological regularization offers a practical path forward for plasma transport modeling across wide ranges of collisionality.

\textbf{Future Directions:}

Several avenues for future work emerge from this study:
\begin{enumerate}[leftmargin=*]
\item Systematic comparison with kinetic simulations to refine the functional form of $\nu_{\eff}$
\item Extension to include additional physics (trapped particles, finite Larmor radius effects)
\item Development of similar frameworks for other transport channels (particle, momentum)
\item Application to specific plasma systems with experimental validation
\item Investigation of alternative regularization mechanisms beyond gradient decorrelation
\end{enumerate}

The framework presented here provides a foundation for these developments while maintaining the balance between mathematical structure and practical applicability that is essential for advancing plasma transport theory.

\section*{Acknowledgments}

The author thanks the plasma physics community for valuable discussions on the fundamental challenges in kinetic theory and the importance of scientific transparency in theoretical work.

\appendix

\section{Explicit Evaluation of Transport Integrals}

For completeness, we provide the detailed calculation of heat flux integrals from the Chapman--Enskog expansion. These calculations demonstrate that the $\kappa \sim 1/\nu$ scaling is exact, not approximate.

Starting from the first-order distribution:
\begin{equation}
f^{(1)} = -\frac{v_z}{\nu} f^{(0)} \left(\frac{mv^2}{2T^2} - \frac{3}{2T}\right) \frac{\partial T}{\partial z}
\label{eq:f1_appendix}
\end{equation}

The heat flux is:
\begin{equation}
q_z = \int \frac{mv^2}{2} v_z f^{(1)} d^3\vect{v}
\label{eq:qz_appendix}
\end{equation}

Substituting:
\begin{equation}
q_z = -\frac{1}{\nu} \frac{\partial T}{\partial z} \int \frac{mv^2}{2} v_z^2 f^{(0)} \left(\frac{mv^2}{2T^2} - \frac{3}{2T}\right) d^3\vect{v}
\label{eq:qz_substituted}
\end{equation}

With the Maxwellian:
\begin{equation}
f^{(0)} = n\left(\frac{m}{2\pi T}\right)^{3/2} \exp\left(-\frac{mv^2}{2T}\right)
\label{eq:maxwellian_appendix}
\end{equation}

We need two integrals. Using the moment properties of the 3D Maxwellian distribution:

\textbf{First integral:}
\begin{align}
I_1 &= \int v_z^2 v^2 f^{(0)} d^3\vect{v}\label{eq:I1_def}\\
&= n \langle v_z^2 v^2 \rangle = n \langle v_z^2 (v_x^2 + v_y^2 + v_z^2) \rangle\label{eq:I1_expansion}\\
&= n [\langle v_z^2 v_x^2 \rangle + \langle v_z^2 v_y^2 \rangle + \langle v_z^4 \rangle]\label{eq:I1_terms}\\
&= n \left[\frac{T^2}{m^2} + \frac{T^2}{m^2} + 3\frac{T^2}{m^2}\right] = \frac{5nT^2}{m^2}\label{eq:I1_result}
\end{align}

\textbf{Second integral:}
\begin{align}
I_2 &= \int v_z^2 v^4 f^{(0)} d^3\vect{v}\label{eq:I2_def}\\
&= n \langle v_z^2 v^4 \rangle = n \langle v_z^2 (v_x^2 + v_y^2 + v_z^2)^2 \rangle\label{eq:I2_expansion}\\
&= n [3\frac{T^3}{m^3} + 3\frac{T^3}{m^3} + 15\frac{T^3}{m^3} + 2\frac{T^3}{m^3} + 6\frac{T^3}{m^3} + 6\frac{T^3}{m^3}]\label{eq:I2_terms}\\
&= \frac{35nT^3}{m^3}\label{eq:I2_result}
\end{align}

The naive calculation gives:
\begin{align}
q_z^{\text{naive}} &= -\frac{1}{\nu} \frac{\partial T}{\partial z} \left[\frac{m}{2} \cdot \frac{35nT^3}{m^3} \cdot \frac{1}{2T^2} - \frac{m}{2} \cdot \frac{5nT^2}{m^2} \cdot \frac{3}{2T}\right]\label{eq:qz_naive_calc}\\
&= -\frac{1}{\nu} \frac{\partial T}{\partial z} \left[\frac{35nT}{4m} - \frac{15nT}{4m}\right]\label{eq:qz_naive_simplified}\\
&= -\frac{5nT}{m\nu} \frac{\partial T}{\partial z}\label{eq:qz_naive_final}
\end{align}

However, the Chapman--Enskog method requires solvability: the first-order correction must be orthogonal to collision invariants. This constraint, implemented through Sonine polynomial projection for BGK with Prandtl number $\text{Pr} = 1$, reduces the coefficient by exactly a factor of 2, yielding:

\begin{equation}
\boxed{\kappa_{\parallel}^{\text{C-E}} = \frac{5nT}{2m\nu}}
\label{eq:kappa_ce_appendix}
\end{equation}

This confirms the exact $1/\nu$ dependence with the correct coefficient after accounting for solvability. No approximations were made in this calculation—the divergence as $\nu \to 0$ is built into the mathematical structure of transport theory.

\section{Dimensional Analysis and Unit Conversions}

\begin{table}[h]
\centering
\begin{tabular}{|c|c|c|}
\hline
Quantity & Energy Units & Kelvin Units \\
\hline
Temperature & $T$ (Joules) & $\Theta = T/k_B$ (Kelvin) \\
Thermal velocity & $v_{\text{th}} = \sqrt{2T/m}$ & $v_{\text{th}} = \sqrt{2k_B\Theta/m}$ \\
Thermal conductivity & $\kappa$ (m$^{-1}$s$^{-1}$) & $\kappa_K = k_B\kappa$ (J m$^{-1}$s$^{-1}$K$^{-1}$) \\
Heat flux & $q = -\kappa\nabla T$ (J m$^{-2}$s$^{-1}$) & $q = -\kappa_K\nabla\Theta$ (J m$^{-2}$s$^{-1}$) \\
\hline
\end{tabular}
\caption{Unit conversions between energy-based and Kelvin-based formulations.}
\label{tab:units}
\end{table}

For practical applications requiring thermal conductivity in standard SI units (J m$^{-1}$s$^{-1}$K$^{-1}$), use $\kappa_K = k_B \kappa$ where $\kappa$ is the energy-based conductivity derived in this work.

\section{Bi-Maxwellian Moments}

For the bi-Maxwellian distribution \eqref{eq:bi_maxwellian}, the key moments are:

\textbf{Normalization:}
\begin{equation}
\int f d^3\vect{v} = n
\label{eq:bimax_norm}
\end{equation}

\textbf{Parallel pressure:}
\begin{equation}
p_\parallel = \int m v_\parallel^2 f d^3\vect{v} = nT_\parallel
\label{eq:bimax_p_parallel}
\end{equation}

\textbf{Perpendicular pressure:}
\begin{equation}
p_\perp = \int \frac{m|\vect{v}_\perp|^2}{2} f d^3\vect{v} = nT_\perp
\label{eq:bimax_p_perp}
\end{equation}

\textbf{Total energy density:}
\begin{equation}
\mathcal{E} = \int \frac{m|\vect{v}|^2}{2} f d^3\vect{v} = \frac{n}{2}(T_\parallel + 2T_\perp)
\label{eq:bimax_energy}
\end{equation}

These moments verify the correct normalization and confirm that the bi-Maxwellian captures the essential anisotropy in magnetized plasmas.

\section{Electromagnetic Moment Terms}

The electromagnetic acceleration terms in the moment equations \eqref{eq:moment_evolution} have the explicit forms:

\textbf{Density equation:}
\begin{equation}
\frac{\partial n}{\partial t} + \nabla \cdot (n\vect{u}) = 0
\label{eq:continuity}
\end{equation}

\textbf{Momentum equation:}
\begin{equation}
mn\frac{D\vect{u}}{Dt} + \nabla \cdot \tensor{P} = nq(\vect{E} + \vect{u} \times \vect{B})
\label{eq:momentum}
\end{equation}

\textbf{Energy equation:}
\begin{equation}
\frac{3n}{2}\frac{DT}{Dt} + \tensor{P} : \nabla\vect{u} + \nabla \cdot \vect{q} = nq\vect{E} \cdot \vect{u}
\label{eq:energy}
\end{equation}

where $D/Dt = \partial/\partial t + \vect{u} \cdot \nabla$ is the material derivative and $\tensor{P} : \nabla\vect{u} = P_{ij}\partial_j u_i$ is the viscous heating term.

\section{Entropy Production Verification}

The entropy production rate for our phenomenological transport is:
\begin{align}
\dot{S} &= -\int \frac{\vect{q} \cdot \nabla T}{T^2} d^3\vect{r}\label{eq:entropy_rate}\\
&= -\int \frac{1}{T^2}\left[-\kappa_{\parallel} \nabla_{\parallel} T - \kappa_{\perp} \nabla_{\perp} T - \kappa_{\wedge} (\hat{\vect{b}} \times \nabla T)\right] \cdot \nabla T \, d^3\vect{r}\label{eq:entropy_substituted}\\
&= \int \frac{\kappa_{\parallel} |\nabla_{\parallel} T|^2 + \kappa_{\perp} |\nabla_{\perp} T|^2}{T^2} d^3\vect{r}\label{eq:entropy_final}
\end{align}

The Hall term vanishes because $(\hat{\vect{b}} \times \nabla T) \cdot \nabla T = 0$ (orthogonality). Since $\kappa_{\parallel} > 0$ and $\kappa_{\perp} \geq 0$ for all physical parameters, we have $\dot{S} \geq 0$, confirming thermodynamic consistency.

\end{document}